\newtheorem{theorem}{Theorem}[section]
\theoremstyle{definition}
\newtheorem{definition}{Definition}[section]
\theoremstyle{remark}
\newtheorem{remark}{Remark}
\newenvironment{assumption}[1][Assumptions]{\begin{trivlist}
\item[\hskip \labelsep {\bfseries #1}]}{\end{trivlist}}
\begin{document}

\runningheads{S.~Roy and I.~R.~Petersen}{Robust $H_\infty$ Estimation of Uncertain Linear Quantum Systems}

\title{Robust $H_\infty$ Estimation of Uncertain Linear Quantum Systems}

\author{Shibdas Roy\affil{1}\corrauth and Ian~R.~Petersen\affil{2}}

\address{\affilnum{1}Department of Electrical and Computer Engineering, National University of Singapore, Singapore.\break \break \affilnum{2}School of Engineering and Information Technology, University of New South Wales, Canberra, Australia.}

\corraddr{E-mail: roy\_shibdas@yahoo.co.in}

\cgs{Australian Research Council}

\begin{abstract}
We consider classical estimators for a class of physically realizable linear quantum systems. Optimal estimation using a complex Kalman filter for this problem has been previously explored. Here, we study robust $H_\infty$ estimation for uncertain linear quantum systems. The estimation problem is solved by converting it to a suitably scaled $H_\infty$ control problem. The solution is obtained in the form of two algebraic Riccati equations. Relevant examples involving dynamic squeezers are presented to illustrate the efficacy of our method.
\end{abstract}

\keywords{Complex Kalman filter; linear quantum systems; $H_\infty$ estimation; uncertain systems; Riccati equations}

\maketitle

\section{Introduction}
In recent years, there has been significant interest in studying estimation and control problems for quantum systems \cite{WM1,YK1,YK2,NY,JNP,NJP,GGY,MP1,MP2,YNJP,GJ,GJN,IRP1}. Linear quantum systems are an important class of quantum systems and have been of particular interest in this context \cite{WM1,NY,JNP,NJP,GGY,GJN,WM2,GZ,WD,NJD,HM,SSM,IRP3}. Such linear quantum systems have been useful in describing quantum optical devices such as linear quantum amplifiers \cite{GZ}, finite bandwidth squeezers \cite{GZ} and optical cavities \cite{WM2,BR}. Coherent feedback control has also been studied a lot recently for linear quantum systems \cite{JNP,NJP,MP1,MP2,HM,WM3,SL,GW,IRP3}. The authors have previously explored a related coherent-classical estimation problem \cite{IRP2,RPH}, where the estimator consists of a classical part, that produces the desired final estimate and a quantum part, which may also involve coherent feedback. In this work, the authors have studied optimal classical estimation for linear quantum systems using a complex Kalman filter.

Ref. \cite{MJ} considered a quantum observer, that is a purely quantum system, which produces a quantum estimate of a variable for the quantum plant. By contrast, we here consider classical estimation for linear quantum systems, where the estimator is a classical system that yields a classical estimate of a variable for the quantum plant. A robust quantum observer for uncertain quantum systems was constructed in Ref. \cite{NY}. On the other hand, here we build a robust complex classical estimator for uncertain linear quantum systems. A robust classical $H_\infty$ estimator for uncertain linear systems was presented in Ref. \cite{FDX}. Here, we develop a more general and complex $H_\infty$ filter for robust classical estimation of uncertain linear quantum systems. The solution to the $H_\infty$ estimation problem is obtained by means of two algebraic Riccati equations, upon converting the estimation problem to a scaled $H_\infty$ control problem.

The paper is structured as follows. We introduce the class of linear quantum systems considered in this paper in Section \ref{sec:lqs}. The complex Kalman filter is discussed in Section \ref{sec:kalman} for optimal classical estimation of linear quantum systems. Section \ref{sec:robust} then considers the robust $H_\infty$ estimation problem for uncertain linear quantum systems and presents our main result. Illustrative examples of our robust estimator for two different scenarios are provided in Sections \ref{sec:num_ex1} and \ref{sec:num_ex2}. Finally, Section \ref{sec:conc} concludes the paper with relevant remarks and possible future work.

\section{Linear Quantum Systems}\label{sec:lqs}
The class of linear quantum systems we consider here are described by the quantum stochastic differential equations (QSDEs) \cite{IRP2}:
\begin{equation}\label{eq:lqs_1}
\begin{split}
\left[\begin{array}{c}
da(t)\\
da(t)^{\#}
\end{array}\right] &= A \left[\begin{array}{c}
a(t)\\
a(t)^{\#}
\end{array}\right] dt + B \left[\begin{array}{c}
d\mathcal{A}(t)\\
d\mathcal{A}(t)^{\#}
\end{array}\right],\\
\left[\begin{array}{c}
d\mathcal{Y}(t)\\
d\mathcal{Y}(t)^{\#}
\end{array}\right] &= C \left[\begin{array}{c}
a(t)\\
a(t)^{\#}
\end{array}\right] dt + D \left[\begin{array}{c}
d\mathcal{A}(t)\\
d\mathcal{A}(t)^{\#}
\end{array}\right],
\end{split}
\end{equation}
where
\begin{equation}\label{eq:lqs_2}
\begin{split}
A &= \Delta(A_1,A_2), \qquad B = \Delta(B_1,B_2),\\
C &= \Delta(C_1,C_2), \qquad D = \Delta(D_1,D_2).
\end{split}
\end{equation}

Here, $a(t) = [a_1(t) \hdots a_n(t)]^T$ is a vector of annihilation operators. The adjoint of the operator $a_i$ is called a creation operator, denoted by $a_i^{*}$. The vector $\mathcal{A}(t) = [\mathcal{A}_1(t) \hdots \mathcal{A}_m(t)]^T$ represents a collection of external independent quantum field operators and the vector $\mathcal{Y}$ represents the corresponding vector of output field operators. Also, the notation $\Delta(A_1,A_2)$ denotes the matrix $\left[\begin{array}{cc} A_1 & A_2\\
A_2^{\#} & A_1^{\#}
\end{array}\right]$. Here, $A_1$, $A_2 \in \mathbb{C}^{n \times n}$, $B_1$, $B_2 \in \mathbb{C}^{n \times m}$, $C_1$, $C_2 \in \mathbb{C}^{m \times n}$, and $D_1$, $D_2 \in \mathbb{C}^{m \times m}$. Moreover, $^{\#}$ denotes the adjoint of a vector of operators or the complex conjugate of a complex matrix. Furthermore, $^\dagger$ denotes the adjoint transpose of a vector of operators or the complex conjugate transpose of a complex matrix.

\begin{definition}
\cite{IRP2} A complex linear quantum system of the form (\ref{eq:lqs_1}), (\ref{eq:lqs_2}) is said to be physically realizable if there exists a complex commutation matrix $\Theta = \Theta^\dagger$, a complex Hamiltonian matrix $M = M^\dagger$, and a coupling matrix $N$ such that
\begin{equation}\label{eq:theta}
\Theta = TJT^\dagger,
\end{equation}
where $J = \left[\begin{array}{cc}
I & 0\\
0 & -I
\end{array}\right]$, $T = \Delta(T_1,T_2)$ is non-singular, $M$ and $N$ are of the form
\begin{equation}
M = \Delta(M_1,M_2), \qquad N = \Delta(N_1,N_2),
\end{equation}
and
\begin{equation}
\begin{split}
A &= -\iota\Theta M - \frac{1}{2}\Theta N^\dagger JN,\\
B &= -\Theta N^\dagger J,\\
C &= N,\\
D &= I.
\end{split}
\end{equation}
\end{definition}

Here, $\iota = \sqrt{-1}$ is the imaginary unit, and the commutation matrix $\Theta$ satisfies the following commutation relation:
\begin{equation}\label{eq:comm_rel1}
\begin{split}
&\left[\left[\begin{array}{c}
a\\
a^{\#}
\end{array}\right], \left[\begin{array}{c}
a\\
a^{\#}
\end{array}\right]^\dagger\right]\\
&= \left[\begin{array}{c}
a\\
a^{\#}
\end{array}\right]\left[\begin{array}{c}
a\\
a^{\#}
\end{array}\right]^\dagger - \left(\left[\begin{array}{c}
a\\
a^{\#}
\end{array}\right]^\# \left[\begin{array}{c}
a\\
a^{\#}
\end{array}\right]^T\right)^T\\
&= \Theta .
\end{split}
\end{equation}

One can verify that $\Theta$ is a $2n \times 2n$ matrix, the elements of which are as follows, given $i,j = 1 \hdots n$:
\begin{align*}
\Theta_{ij} &= [a_i,a_j^{*}],\\
\Theta_{(n+i)(n+j)} &= [a_i^{*},a_j],\\
\Theta_{i(n+j)} &= [a_i,a_j],\\
\Theta_{(n+i)j} &= [a_i^{*},a_j^{*}],\\
\Theta_{ii} &= 1,\\
\Theta_{(n+i)(n+i)} &= -1,\\
\Theta_{i(n+i)} &= \Theta_{(n+i)i} = 0.
\end{align*}

The annihilation and creation operators can be used to construct the number operator $N = a^\dagger a$, the eigenstates of which form the orthonormal number (or Fock) states \cite{BR}:
\begin{equation*}
N|q\rangle = a^\dagger a|q\rangle = q|q\rangle , \quad q = 0,1,2,\hdots 
\end{equation*}

In particular, the state $|0\rangle$ is called the vacuum state \cite{BR}:
\begin{equation*}
a|0\rangle = 0.
\end{equation*}

The annihilation and creation operators have the properties of lowering and raising the number of a state respectively \cite{BR}:
\begin{align*}
a|q\rangle &= \sqrt{q}|q-1\rangle,\\
a^\dagger |q\rangle &= \sqrt{q+1}|q+1\rangle.
\end{align*}

\begin{theorem}\label{thm:phys_rlz}
\cite{IRP2} The linear quantum system (\ref{eq:lqs_1}), (\ref{eq:lqs_2}) is physically realizable if and only if there exists a complex matrix $\Theta = \Theta^\dagger$ such that $\Theta$ is of the form in (\ref{eq:theta}), and
\begin{equation}\label{eq:phys_rlz}
\begin{split}
A\Theta + \Theta A^\dagger + BJB^\dagger &= 0,\\
B &= -\Theta C^\dagger J,\\
D &= I.
\end{split}
\end{equation}
\end{theorem}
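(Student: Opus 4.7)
The plan is to prove both directions of the equivalence by exploiting the algebraic structure afforded by the definition of physical realizability and the involutive properties of $\Theta$ and $J$.

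For the \emph{necessity} direction, I would assume the system is physically realizable in the sense of the preceding definition, so that there exist $M=M^\dagger$, $N$, and $\Theta=\Theta^\dagger$ of the form (\ref{eq:theta}) with $A=-\iota\Theta M-\tfrac12\Theta N^\dagger JN$, $B=-\Theta N^\dagger J$, $C=N$, $D=I$. The conditions $B=-\Theta C^\dagger J$ and $D=I$ are then immediate from $C=N$. For the first equation, I would compute $A\Theta+\Theta A^\dagger$ by substituting the expression for $A$ and using $M^\dagger=M$, $\Theta^\dagger=\Theta$, $J^\dagger=J$; the terms involving $M$ cancel between $A\Theta$ and $\Theta A^\dagger$, leaving $-\Theta N^\dagger JN\Theta$. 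A separate short computation, using $J^2=I$, yields $BJB^\dagger=(-\Theta N^\dagger J)J(-JN\Theta)=\Theta N^\dagger JN\Theta$. Adding the two gives zero.

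For the \emph{sufficiency} direction, the candidates for $M$ and $N$ are dictated by the definition: I would set $N:=C$ and, assuming $\Theta$ is invertible (which follows from $\Theta=TJT^\dagger$ with $T$ non-singular), solve $A=-\iota\Theta M-\tfrac12\Theta N^\dagger JN$ for $M$, obtaining
\begin{equation*}
M=\iota\Theta^{-1}A+\tfrac{\iota}{2}N^\dagger JN.
\end{equation*}
The relations $B=-\Theta N^\dagger J$, $C=N$, $D=I$ are then satisfied by construction. The critical content to verify is $M=M^\dagger$. Taking the adjoint of the displayed formula and subtracting, I would obtain
\begin{equation*}
\Theta(M-M^\dagger)\Theta=\iota\bigl(A\Theta+\Theta A^\dagger+BJB^\dagger\bigr),
\end{equation*}
after re-expressing $\Theta N^\dagger JN\Theta$ as $BJB^\dagger$ via the already-established second condition. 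The right-hand side vanishes by hypothesis, and invertibility of $\Theta$ forces $M=M^\dagger$.

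The remaining point is that $M$ and $N$ must be in the doubled-up form $\Delta(\cdot,\cdot)$. For $N=C$ this is given directly by (\ref{eq:lqs_2}). For $M$, I would invoke the elementary algebraic fact that the set of matrices of the form $\Delta(X_1,X_2)$ is closed under addition, multiplication, adjoint, and (when applicable) inversion; since $A$, $\Theta$, $N$, and $J$ all have this structure, so do $\Theta^{-1}A$ and $N^\dagger JN$, hence so does $M$. The main obstacle I anticipate is purely bookkeeping, namely tracking the adjoints and the $J^2=I$ identity cleanly in the $BJB^\dagger$ computation so that the cancellation between the Hamiltonian and coupling contributions is transparent; once that is done, both directions reduce to substitution.
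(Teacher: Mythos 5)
The paper itself only quotes this theorem from reference \cite{IRP2} and supplies no proof, so there is no in-paper argument to compare against; judged on its own terms, your two main computations are correct and are the standard route: for necessity, the $M$-terms cancel in $A\Theta+\Theta A^\dagger$ leaving $-\Theta N^\dagger JN\Theta$, which is exactly $-BJB^\dagger$; for sufficiency, setting $N:=C$ and $M:=\iota\Theta^{-1}A+\tfrac{\iota}{2}N^\dagger JN$ (with $\Theta$ invertible because $T$ and $J$ are) reproduces the defining equations, and the identity $\Theta(M-M^\dagger)\Theta=\iota\bigl(A\Theta+\Theta A^\dagger+BJB^\dagger\bigr)$ forces $M=M^\dagger$.

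There is, however, a genuine error in your closing ``remaining point.'' The matrix $J=\left[\begin{smallmatrix}I&0\\0&-I\end{smallmatrix}\right]$ is \emph{not} of the form $\Delta(\cdot,\cdot)$: one has $\Delta(I,0)=I\neq J$, and, writing $\Sigma=\left[\begin{smallmatrix}0&I\\I&0\end{smallmatrix}\right]$, a matrix $X$ is of the form $\Delta(X_1,X_2)$ if and only if $\Sigma X^{\#}\Sigma=X$, whereas $\Sigma J^{\#}\Sigma=-J$. Consequently $\Theta=TJT^\dagger$ is in general not of the $\Delta$ form either (take $T=I$, so $\Theta=J$). Moreover, the $\Delta$ class is not closed under multiplication by $\iota$, since $\Sigma(\iota X)^{\#}\Sigma=-\iota\,\Sigma X^{\#}\Sigma$. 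If all of your claimed structural facts were true simultaneously, then $\Theta^{-1}A$ and $\iota\Theta^{-1}A$ would both be of $\Delta$ form, which forces $\Theta^{-1}A=0$; so the argument as written proves too much and cannot be fixed by bookkeeping alone. The repair is a parity argument: call $X$ \emph{even} if $\Sigma X^{\#}\Sigma=X$ (i.e., $X=\Delta(X_1,X_2)$ for some $X_1,X_2$) and \emph{odd} if $\Sigma X^{\#}\Sigma=-X$. Parity is multiplicative under products, is preserved by $^\dagger$, $^\#$ and inversion, and $A$, $N=C$, $T$ are even while $J$, $\Theta$, $\Theta^{-1}$ and $\iota I$ are odd. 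Hence $\iota\Theta^{-1}A$ and $\tfrac{\iota}{2}N^\dagger JN$ each contain exactly two odd factors and are therefore even, i.e., $M$ is of the required form $\Delta(M_1,M_2)$. With that substitution your proof is complete.
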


If the system (\ref{eq:lqs_1}) is physically realizable, then the matrices $M$ and $N$ define a complex open harmonic oscillator with a Hamiltonian operator
\[ \mathbf{H} = \frac{1}{2} \left[\begin{array}{cc}
a^\dagger & a^T
\end{array}\right] M \left[\begin{array}{c}
a\\
a^{\#}
\end{array}\right],\]
and a coupling operator
\[ \mathbf{L} = \left[\begin{array}{cc}
N_1 & N_2
\end{array}\right] \left[\begin{array}{c}
a\\
a^{\#}
\end{array}\right].\]

\section{Kalman Filter}\label{sec:kalman}
The schematic diagram of a classical estimation scheme is provided in Fig. \ref{fig:cls_scm}. We consider a quantum plant, which is a quantum system of the form (\ref{eq:lqs_1}), (\ref{eq:lqs_2}), defined as follows:
\begin{equation}\label{eq:plant}
\begin{split}
\left[\begin{array}{c}
da(t)\\
da(t)^{\#}
\end{array}\right] &= A \left[\begin{array}{c}
a(t)\\
a(t)^{\#}
\end{array}\right] dt + B \left[\begin{array}{c}
d\mathcal{A}(t)\\
d\mathcal{A}(t)^{\#} 
\end{array}\right],\\
\left[\begin{array}{c}
d\mathcal{Y}(t)\\
d\mathcal{Y}(t)^{\#}
\end{array}\right] &= C \left[\begin{array}{c}
a(t)\\
a(t)^{\#}
\end{array}\right] dt + D \left[\begin{array}{c}
d\mathcal{A}(t)\\
d\mathcal{A}(t)^{\#}
\end{array}\right],\\
z &= L\left[\begin{array}{c}
a(t)\\
a(t)^{\#} 
\end{array}\right].
\end{split}
\end{equation}

\begin{figure}
\centering
\includegraphics[width=\textwidth]{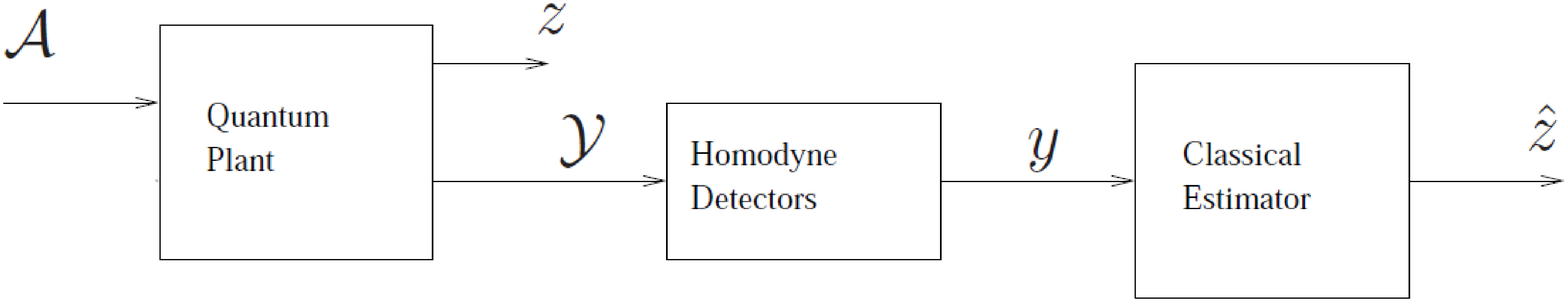}
\caption{Schematic diagram of classical estimation for a quantum plant.}
\label{fig:cls_scm}
\end{figure}

Here, $z$ denotes a scalar operator on the underlying Hilbert space and represents the quantity to be estimated. Also, $\mathcal{Y}(t)$ represents the vector of output fields of the plant, and $\mathcal{A}(t)$ represents a vector of quantum disturbances acting on the plant.

A quadrature of each component of the vector $\mathcal{Y}(t)$ is measured using homodyne detection to produce a corresponding classical signal $y_i$:
\begin{equation}\label{eq:class_hd}
\begin{split}
dy_1 &= \frac{e^{-\iota\theta_1}}{\sqrt{2}}d\mathcal{Y}_1 + \frac{e^{\iota\theta_1}}{\sqrt{2}}d\mathcal{Y}_1^{*},\\
&\vdots\\
dy_m &= \frac{e^{-\iota\theta_m}}{\sqrt{2}}d\mathcal{Y}_m + \frac{e^{\iota\theta_m}}{\sqrt{2}}d\mathcal{Y}_m^{*}.
\end{split}
\end{equation}

Here, the angles $\theta_1,\hdots,\theta_m$ determine the quadrature measured by each homodyne detector. The vector of classical signals $y = [y_1 \hdots y_m]^T$ is then used as the input to a classical estimator defined as follows:
\begin{equation}\label{eq:class_estimator}
\begin{split}
dx_e &= A_ex_edt + K_edy,\\
\hat{z} &= L_ex_e.
\end{split}
\end{equation}

For the sake of comparison, we will first consider the optimal estimation problem for quantum linear systems; see also \cite{NY,BHJ}. The optimal classical estimator is given by the standard (complex) Kalman filter defined for the system (\ref{eq:plant}), (\ref{eq:class_hd}). This optimal classical estimator is obtained from the solution to the algebraic Riccati equation:
\begin{equation}\label{eq:class_riccati}
\begin{split}
AP &+ PA^\dagger + BB^\dagger - (B + PC^\dagger)S^\dagger S(B + PC^\dagger)^\dagger =0,
\end{split}
\end{equation}
where
\begin{equation}
\begin{split}
S &= \left[\begin{array}{cc}
S_1 & S_2
\end{array}\right],\\
S_1 &= \left[\begin{array}{cccc}
\frac{e^{-\iota\theta_1}}{\sqrt{2}} & 0 & \hdots & 0\\
0 & \frac{e^{-\iota\theta_2}}{\sqrt{2}} & \hdots & 0\\
 & & \ddots & \\
 & & & \frac{e^{-\iota\theta_m}}{\sqrt{2}}
\end{array}\right],\\
S_2 &= \left[\begin{array}{cccc}
\frac{e^{\iota\theta_1}}{\sqrt{2}} & 0 & \hdots & 0\\
0 & \frac{e^{\iota\theta_2}}{\sqrt{2}} & \hdots & 0\\
 & & \ddots & \\
 & & & \frac{e^{\iota\theta_m}}{\sqrt{2}}
\end{array}\right].
\end{split}
\end{equation}

Here we have assumed that the quantum disturbance $\mathcal{A}$ is purely canonical, i.e. $d\mathcal{A}d\mathcal{A}^\dagger = Idt$ and hence $D=I$.

Then, the corresponding optimal classical estimator (\ref{eq:class_estimator}) is defined by the equations:
\begin{equation}\label{eq:cls_sys_est}
\begin{split}
A_e &= A - K_eSC,\\
K_e &= (B + PC^\dagger)S^\dagger,\\
L_e &= L.
\end{split}
\end{equation}

\section{Robust $H_\infty$ Filter}\label{sec:robust}

Corresponding to the system described by (\ref{eq:plant}), (\ref{eq:class_hd}), we define our uncertain system modelled as follows:
\begin{equation}\label{eq:uncertain1}
\begin{split}
(\Sigma_1): \dot{x}(t) &= [A+\Delta A(t)]x(t) + [B+\Delta B(t)]w(t),\\
z(t) &= Lx(t),\\
y'(t) &= S[C+\Delta C(t)]x(t) + SDw(t),
\end{split}
\end{equation}
where $x(t) := \left[\begin{array}{c}
a(t)\\
a(t)^{\#}
\end{array}\right]$ is the state, $w(t)$ is the disturbance input, $z(t)$ is a linear combination of the state variables to be estimated, $y'(t)$ is the measured output, $L \in \mathbb{C}^{p \times 2n}$, $SC \in \mathbb{C}^{m \times 2n}$, $SD \in \mathbb{C}^{m \times 2m}$, and $\Delta A(\cdot)$, $\Delta B(\cdot)$ and $\Delta C(\cdot)$ denote the time-varying parameter uncertainties. These uncertainties are in the following structure
\begin{equation}\label{eq:unc_pars}
\begin{split}
\left[\begin{array}{c}
\Delta A(t)\\
\Delta C(t)
\end{array}\right] &= \left[\begin{array}{c}
H_1\\
H_3
\end{array}\right]F_1(t)E,\\
\Delta B(t) &= H_2F_2(t)G,
\end{split}
\end{equation}
where $H_1$, $H_2$, $H_3$, $E$ and $G$ are known complex constant matrices with appropriate dimensions, and the unknown matrix functions $F_1(\cdot)$ and $F_2(\cdot)$ satisfy the following:
\begin{equation}\label{eq:unc_constraint}
\begin{split}
F_1^\dagger(t)F_1(t) &\leq I, \qquad \forall t,\\
F_2^\dagger(t)F_2(t) &\leq I, \qquad \forall t.
\end{split}
\end{equation}

Note that for the system (\ref{eq:uncertain1}) to be physically realizable, the following constraints are required to be satisfied by the uncertainties:
\begin{equation}\label{eq:phys_rlz_unc}
\begin{split}
\Delta A\Theta + \Theta\Delta A^\dagger + BJ\Delta B^\dagger + \Delta BJB^\dagger + \Delta BJ\Delta B^\dagger &= 0,\\
\Delta B &= -\Theta\Delta C^\dagger J.
\end{split}
\end{equation}

The robust $H_\infty$ estimation problem for the uncertain system (\ref{eq:uncertain1}) can be converted into a scaled $H_\infty$ control problem, as described in Ref. \cite{FDX}, by introducing the following parameterized linear time-invariant system corresponding to system (\ref{eq:uncertain1}):
\begin{equation}\label{eq:uncertain2}
\begin{split}
(\Sigma_2): \dot{x}(t) &= Ax(t) + \left[\begin{array}{ccc}
B & \frac{\gamma}{\epsilon_1}H_1 & \frac{\gamma}{\epsilon_2}H_2
\end{array}\right]\tilde{w}(t),\\
\tilde{z}(t) &= \left[\begin{array}{c}
\epsilon_1E\\
0\\
L
\end{array}\right]x(t) + \left[\begin{array}{ccc}
0 & 0 & 0\\
\epsilon_2G & 0 & 0\\
0 & 0 & 0
\end{array}\right]\tilde{w}(t) + \left[\begin{array}{c}
0\\
0\\
-I
\end{array}\right]u(t),\\
y'(t) &= SCx(t) + \left[\begin{array}{ccc}
SD & \frac{\gamma}{\epsilon_1}SH_3 & 0
\end{array}\right]\tilde{w}(t).
\end{split}
\end{equation}

Here, $u(t)$ is the control input, $\tilde{z}(t)$ is the controlled output, $\epsilon_1$, $\epsilon_2 > 0$ are suitably chosen scaling parameters and $\gamma > 0$ is the desired level of disturbance attenuation for the robust $H_\infty$ estimation problem. We also have the augmented disturbance
\[ \tilde{w}(t) := \left[\begin{array}{c}
w(t)\\
\frac{\epsilon_1}{\gamma}\eta(t)\\
\frac{\epsilon_2}{\gamma}\xi(t)
\end{array}\right],\]
where
\[ \eta(t) := F_1(t)Ex(t),\]
and 
\[ \xi(t) := F_2(t)Gw(t).\]

The following assumptions are made for the system (\ref{eq:uncertain2}):
\begin{assumption}
\begin{itemize}
\item[]
\item[A1.] The system matrix $A$ is stable.
\item[A2.] $\epsilon_2^2G^\dagger G < I$.
\item[A3.] $\left[\begin{array}{cc}
SD & SH_3
\end{array}\right]$ is of full row rank.
\item[A4.] rank $\left[\begin{array}{cc}
A-\iota\omega I & B\\
C & D
\end{array}\right] = 2n+2m$, \qquad $\forall\omega\geq 0$.
\end{itemize}
\end{assumption}

\begin{remark}
The assumption A1 is required so that the $H_\infty$ norm for the combined plant-estimator system is finite. The remaining assumptions are technical assumptions arising in $H_\infty$ control theory which are required in order to obtain a solution using Riccati equations.
\end{remark}

A complete solution to the robust $H_\infty$ estimation problem is then provided below.

\begin{theorem}\label{thm:h_infinity}
Consider the robust $H_\infty$ estimation problem for the uncertain system (\ref{eq:uncertain1}) converted to a scaled $H_\infty$ control problem for the system (\ref{eq:uncertain2}), satisfying the assumptions A1 to A4. Given a prescribed level of disturbance attenuation $\gamma > 0$, the robust $H_\infty$ estimation problem for the uncertain system (\ref{eq:uncertain1}) is solvable if for some $\epsilon_1$, $\epsilon_2 > 0$, the following conditions are satisfied:

(a) There exists a stabilising solution $X = X^\dagger \geq 0$ to the algebraic Riccati equation:
\begin{equation}\label{eq:robust_riccati1}
\overline{A}^\dagger X+X\overline{A}+X(\gamma^{-2} \overline{B}_1\overline{B}_1^\dagger)X + \overline{C}_1^\dagger (I-\overline{D}_{12}\overline{E}_1^{-1}\overline{D}_{12}^\dagger)\overline{C}_1 = 0.
\end{equation}

(b) There exists a stabilising solution $Y = Y^\dagger \geq 0$ to the algebraic Riccati equation:
\begin{equation}\label{eq:robust_riccati2}
\begin{split}
\overline{A}Y&+Y\overline{A}^\dagger +Y\overline{C}_1^\dagger\overline{C}_1Y+ \gamma^{-2}\overline{B}_1\overline{B}_1^\dagger \\ &-(\gamma^{-1}\overline{B}_1\overline{D}_{21}^\dagger+\gamma Y\overline{C}_2^\dagger) \overline{S}^\dagger\overline{E}_2^{-1}\overline{S}(\gamma^{-1}\overline{B}_1\overline{D}_{21}^\dagger+\gamma Y\overline{C}_2^\dagger)^\dagger = 0.
\end{split}
\end{equation}

(c) $I-\gamma^{-2}XY > 0$.

Here, we have
\vspace*{-2mm}
\begin{equation}\label{eq:final_parameters1}
\begin{split}
\overline{A} &= A,\\
\overline{B}_1 &= \left[\begin{array}{ccc}
B(I-\epsilon_2^2G^\dagger G)^{-1/2} & \frac{\gamma}{\epsilon_1}H_1 & \frac{\gamma}{\epsilon_2}H_2
\end{array}\right],\\
\overline{C}_1 &= \left[\begin{array}{c}
\epsilon_1E\\
0\\
L
\end{array}\right],\\
\overline{D}_{12} &= \left[\begin{array}{c}
0\\
0\\
-I
\end{array}\right],\\
\overline{C}_2 &= C,\\
\overline{D}_{21} &= \left[\begin{array}{ccc}
D(I-\epsilon_2^2G^\dagger G)^{-1/2} & \frac{\gamma}{\epsilon_1}H_3 & 0
\end{array}\right],\\
\overline{S} &= S,
\end{split}
\end{equation}
and
\begin{equation}\label{eq:final_parameters3}
\begin{split}
\overline{E}_1 &= \overline{D}_{12}^\dagger\overline{D}_{12} = I,\\
\overline{E}_2 &= \overline{S}\overline{D}_{21}\overline{D}_{21}^\dagger \overline{S}^\dagger = SD(I-\epsilon_2^2G^\dagger G)^{-1}D^\dagger S^\dagger + \frac{\gamma^2}{\epsilon_1^2}SH_3H_3^\dagger S^\dagger .
\end{split}
\end{equation}

When conditions (a)-(c) are satisfied, a suitable estimator is given by:
\begin{equation}\label{eq:robust_estimator}
\begin{split}
\dot{\hat{x}}(t) &= A_K\hat{x}(t) + B_Ky'(t),\\
\hat{z}(t) &= C_K\hat{x}(t),
\end{split}
\end{equation}
where
\vspace*{-2mm}
\begin{equation}
\begin{split}
A_K &= \overline{A} - B_K\overline{S}\overline{C}_2 + \gamma^{-2}(\overline{B}_1-B_K\overline{S}\overline{D}_{21})\overline{B}_1^\dagger X,\\
B_K &= \gamma^2(I-YX)^{-1}(Y\overline{C}_2^\dagger\overline{S}^\dagger + \gamma^{-2} \overline{B}_1\overline{D}_{21}^\dagger\overline{S}^\dagger)\overline{E}_2^{-1},\\
C_K &= -\overline{E}_1^{-1}\overline{D}_{12}^\dagger\overline{C}_1.
\end{split}
\end{equation}
\end{theorem}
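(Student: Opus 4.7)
The plan is to follow the scaling-based reduction of Xie--de Souza \cite{FDX}, extended from the real to the complex matrix setting required by quantum systems, and then invoke the complex version of the Doyle--Glover--Khargonekar--Francis (DGKF) solution for the resulting nominal $H_\infty$ control problem.

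First I would establish the equivalence between $(\Sigma_1)$ and $(\Sigma_2)$. By the uncertainty structure (\ref{eq:unc_pars}) together with (\ref{eq:unc_constraint}), the fictitious signals $\eta(t) = F_1(t)Ex(t)$ and $\xi(t) = F_2(t)Gw(t)$ obey $\|\eta\|_2 \le \|Ex\|_2$ and $\|\xi\|_2 \le \|Gw\|_2$. Treating $\eta, \xi$ as additional disturbances with scalings $\epsilon_1/\gamma$ and $\epsilon_2/\gamma$, and augmenting the controlled output by $\epsilon_1 E x$ and $\epsilon_2 G w$, a standard S-procedure / small-gain argument shows that if the nominal closed-loop system for $(\Sigma_2)$ achieves $\|\tilde z\|_2 < \gamma\|\tilde w\|_2$ for every admissible $\tilde w$, then the uncertain closed loop for $(\Sigma_1)$ achieves the $H_\infty$ level $\gamma$ from $w$ to the estimation error $z - \hat z$. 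This reduces robust estimation to a nominal scaled synthesis problem.

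Second, I would eliminate the $\epsilon_2 G w$ feedthrough in $\tilde z$ by the usual loop-shifting / congruence transformation of $\tilde w$, which is what introduces the factor $(I-\epsilon_2^2 G^\dagger G)^{-1/2}$ in $\overline B_1$ and $\overline D_{21}$; Assumption A2 guarantees that this square root is well defined and positive definite. After this transformation, the problem acquires the canonical $H_\infty$ synthesis form with data $(\overline A, \overline B_1, \overline C_1, \overline D_{12}, \overline C_2, \overline D_{21})$ as in (\ref{eq:final_parameters1}), and Assumptions A3--A4 become precisely the full-rank / no-imaginary-axis-zero conditions demanded by DGKF. A direct computation verifies the stated expressions $\overline E_1 = I$ and $\overline E_2 = SD(I-\epsilon_2^2G^\dagger G)^{-1}D^\dagger S^\dagger + (\gamma^2/\epsilon_1^2)SH_3 H_3^\dagger S^\dagger$ in (\ref{eq:final_parameters3}).

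Third, I would apply the complex DGKF theorem to obtain parts (a)--(c) of the statement: the ``control'' Riccati (\ref{eq:robust_riccati1}) yields the stabilising $X = X^\dagger \ge 0$ solving the full-information subproblem, whose optimal gain gives $C_K = -\overline E_1^{-1}\overline D_{12}^\dagger \overline C_1$; the ``filter'' Riccati (\ref{eq:robust_riccati2}), with the measurement data appropriately pre-multiplied by $\overline S = S$ to reflect the homodyne relation (\ref{eq:class_hd}), yields the stabilising $Y = Y^\dagger \ge 0$; and the coupling inequality $I - \gamma^{-2} XY > 0$ ensures that the central controller is well-posed. Specialising the central DGKF controller to the estimation setting, in which $u$ enters $\tilde z$ only through $\overline D_{12} = [0\;0\;-I]^\top$, produces the estimator (\ref{eq:robust_estimator}) with the stated $(A_K, B_K, C_K)$.

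The main obstacle is the complex-matrix extension: one must check that the DGKF machinery carries over verbatim under the replacement of real transposes by the Hermitian adjoint $\dagger$, that the Riccati equations admit Hermitian stabilising solutions, and that the spectral factorisations used in the derivation remain valid in $\mathbb{C}^{n\times n}$. Once this is granted, the core effort is the algebraic bookkeeping of the scaling reduction in Step~1 and the loop-shifting in Step~2; the compatibility of the physical-realisability constraints (\ref{eq:phys_rlz_unc}) with the decomposition (\ref{eq:unc_pars}) then follows because those constraints are imposed on the admissible $\Delta A, \Delta B, \Delta C$ directly rather than on $F_1, F_2$, so they are preserved by the scaling step.
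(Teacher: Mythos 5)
Your proposal is correct and follows essentially the same route as the paper: reduce the robust estimation problem to the scaled nominal $H_\infty$ control problem $(\Sigma_2)$ via the Xie--de Souza construction, remove the $\epsilon_2 G$ feedthrough by loop-shifting (which is exactly where $(I-\epsilon_2^2G^\dagger G)^{-1/2}$ and Assumption A2 enter), and then invoke a standard two-Riccati $H_\infty$ synthesis result with the $\gamma^{-1}$ scaling of $\overline{B}_1$ and $\overline{D}_{21}$. The only differences are cosmetic: the paper cites the first-principles solution of \cite{PAJ} (whose non-normalised Riccati/central-controller formulas match the stated $\overline{E}_1,\overline{E}_2$ expressions directly) rather than DGKF, and it takes the $(\Sigma_1)\to(\Sigma_2)$ equivalence as given from \cite{FDX} rather than re-deriving the small-gain/S-procedure argument as you sketch.
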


\begin{proof}
The system (\ref{eq:uncertain2}) is of the following form:
\begin{equation}\label{eq:uncertain3}
\begin{split}
(\Sigma_2): \dot{x}(t) &= Ax(t) + B_1\tilde{w}(t) + B_2u(t),\\
\tilde{z}(t) &= C_1x(t) + D_{11}\tilde{w}(t) + D_{12}u(t),\\
y'(t) &= SC_2x(t) + SD_{21}\tilde{w}(t) + SD_{22}u(t),
\end{split}
\end{equation}
where 
\begin{equation}
\begin{split}
B_1 &= \left[\begin{array}{ccc}
B & \frac{\gamma}{\epsilon_1}H_1 & \frac{\gamma}{\epsilon_2}H_2
\end{array}\right],\\
B_2 &= 0,\\
C_1 &= \left[\begin{array}{c}
\epsilon_1E\\
0\\
L
\end{array}\right],\\
D_{11} &= \left[\begin{array}{ccc}
0 & 0 & 0\\
\epsilon_2G & 0 & 0\\
0 & 0 & 0
\end{array}\right],\\
D_{12} &= \left[\begin{array}{c}
0\\
0\\
-I
\end{array}\right],\\
C_2 &= C,\\
D_{21} &= \left[\begin{array}{ccc}
D & \frac{\gamma}{\epsilon_1}H_3 & 0
\end{array}\right],\\
D_{22} &= 0.
\end{split}
\end{equation}

We will use the results from Ref. \cite{PAJ} to solve the above $H_\infty$ control problem. However, Ref. \cite{PAJ} requires the matrix $D_{11}$ to be zero. The system (\ref{eq:uncertain2}) with non-zero $D_{11}$ can be converted to an equivalent system having $D_{11} = 0$ using the \emph{loop-shifting} technique, as outlined in Ref. \cite{ZDG}.

We define
\begin{equation}
\tilde{D}_{11} := \left[\begin{array}{cc}
D_{1111} & D_{1112}\\
D_{1121} & D_{1122}+D_\infty
\end{array}\right],
\end{equation}
where we let $D_\infty = -D_{1122}-D_{1121}(I - D_{1111}^\dagger D_{1111})^{-1}D_{1111}^\dagger D_{1112}$. Here, for our system (\ref{eq:uncertain2}), we take $D_{1111} = \left[\begin{array}{cc}
0 & 0\\
\epsilon_2G & 0\\
\end{array}\right]$, $D_{1121} = \left[\begin{array}{cc}
0 & 0\\
\end{array}\right]$, $D_{1112} = \left[\begin{array}{c}
0\\
0
\end{array}\right]$, and $D_{1122} = 0$. Note that $\left|\left|\tilde{D}_{11}\right|\right|<1$ follows from Assumption A2. Then, we get $D_\infty = 0$. Thus, we have $\tilde{D}_{11} = D_{11}$.

Hence, the $H_\infty$ control problem in (\ref{eq:uncertain3}) takes the following form:
\begin{equation}\label{eq:uncertain4}
\begin{split}
(\Sigma_3): \dot{x}(t) &= \tilde{A}x(t) + \tilde{B}_1\tilde{w}(t) + \tilde{B}_2u(t),\\
\tilde{z}(t) &= \tilde{C}_1x(t) + \tilde{D}_{11}\tilde{w}(t) + \tilde{D}_{12}u(t),\\
y'(t) &= \tilde{S}\tilde{C}_2x(t) + \tilde{S}\tilde{D}_{21}\tilde{w}(t) + \tilde{S}\tilde{D}_{22}u(t),
\end{split}
\end{equation}
where
\begin{equation}
\begin{split}
\tilde{A} &= A+B_2D_\infty C_2 = A,\\
\tilde{B}_1 &= B_1+B_2D_\infty D_{21} = B_1,\\
\tilde{B}_2 &= B_2 = 0,\\
\tilde{C}_1 &= C_1 + D_{12}D_\infty C_2 = C_1,\\
\tilde{D}_{12} &= D_{12},\\
\tilde{C}_2 &= C_2,\\
\tilde{D}_{21} &= D_{21},\\
\tilde{D}_{22} &= D_{22} = 0,\\
\tilde{S} &= S.
\end{split}
\end{equation}

The $H_\infty$ control problem equivalent to the above system (\ref{eq:uncertain4}) is \cite{ZDG}:
\begin{equation}\label{eq:uncertain5}
\begin{split}
(\Sigma_4): \dot{x}(t) &= \overline{A}x(t) + \overline{B}_1\tilde{w}(t) + \overline{B}_2u(t),\\
\tilde{z}(t) &= \overline{C}_1x(t) + \overline{D}_{11}\tilde{w}(t) + \overline{D}_{12}u(t),\\
y'(t) &= \overline{S}\overline{C}_2x(t) + \overline{S}\overline{D}_{21}\tilde{w}(t) + \overline{S}\overline{D}_{22}u(t),
\end{split}
\end{equation}

Here,
\begin{equation}\label{eq:final_parameters4}
\begin{split}
\overline{A} &= \tilde{A}+\tilde{B}_1R_1^{-1}\tilde{D}_{11}^\dagger \tilde{C}_1,\\
\overline{B}_1 &= \tilde{B}_1R_1^{-1/2},\\
\overline{B}_2 &= \tilde{B}_2 + \tilde{B}_1R_1^{-1}\tilde{D}_{11}^\dagger\tilde{D}_{12},\\
\overline{C}_1 &= \tilde{R}_1^{-1/2}\tilde{C}_1,\\
\overline{D}_{11} &= 0,\\
\overline{D}_{12} &= \tilde{R}_1^{-1/2}\tilde{D}_{12},\\
\overline{C}_2 &= \tilde{C}_2 + \tilde{D}_{21}R_1^{-1}\tilde{D}_{11}^\dagger \tilde{C}_1,\\
\overline{D}_{21} &= \tilde{D}_{21}R_1^{-1/2},\\
\overline{D}_{22} &= \tilde{D}_{21}R_1^{-1}\tilde{D}_{11}^\dagger\tilde{D}_{12},\\
\overline{S} &= S,
\end{split}
\end{equation}
where
\begin{equation}
\begin{split}
R_1 &= I - \tilde{D}_{11}^\dagger\tilde{D}_{11} = \left[\begin{array}{ccc}
I-\epsilon_2^2G^\dagger G & 0 & 0\\
0 & I & 0\\
0 & 0 & I
\end{array}\right],\\
\tilde{R}_1 &= I - \tilde{D}_{11}\tilde{D}_{11}^\dagger = \left[\begin{array}{ccc}
I & 0 & 0\\
0 & I-\epsilon_2^2GG^\dagger & 0\\
0 & 0 & I
\end{array}\right].
\end{split}
\end{equation}

One can verify that (\ref{eq:final_parameters4}) yields (\ref{eq:final_parameters1}), $\overline{B}_2 = 0$ and $\overline{D}_{22} = 0$. Note that from assumptions A1 to A4 and (\ref{eq:final_parameters3}), it follows that we have:
\begin{itemize}
\item $\overline{E}_1 > 0$.
\item $\overline{E}_2 > 0$.
\item rank $\left[\begin{array}{cc}
\overline{A}-\iota\omega I & \overline{B}_2\\
\overline{C}_1 & \overline{D}_{12}
\end{array}\right] = 2n + p$ for all $\omega \geq 0$.
\item rank $\left[\begin{array}{cc}
\overline{A}-\iota\omega I & \overline{B}_1\\
\overline{C}_2 & \overline{D}_{21}
\end{array}\right] = 2n + m$ for all $\omega \geq 0$.
\end{itemize}

Hence, the $H_\infty$ control problem for the system (\ref{eq:uncertain5}) can be solved using the results in Ref. \cite{PAJ}. However, Ref. \cite{PAJ} assumed $\gamma = 1$. The results from that paper may be generalised for different values of $\gamma > 0$, simply by scaling the coefficients of the disturbance $\tilde{w}(t)$ in (\ref{eq:uncertain5}), viz. $\overline{B}_1$ and $\overline{D}_{21}$ (note $\overline{D}_{11}$ = 0), by $\gamma^{-1}$. Note that this also has an effect on $\overline{E}_2$, which is scaled by $\gamma^{-2}$. This yields conditions (a)-(c) of the theorem, which are required to be satisfied by the system (\ref{eq:uncertain5}), such that a suitable estimator is given by (\ref{eq:robust_estimator}).
\end{proof}

The transfer function of the robust estimator (\ref{eq:robust_estimator}) can be obtained to be:
\begin{equation}\label{eq:rob_filter_tf}
G_K(s) := \frac{\hat{z}(s)}{y'(s)} = C_K(sI-A_K)^{-1}B_K.
\end{equation}

The estimation error is given as:
\begin{equation}
e(t) := \hat{z}(t)-z(t) = C_K\hat{x}(t)-Lx(t).
\end{equation}

Then, the disturbance-to-error transfer function may be obtained as:
\begin{equation}\label{eq:error_spectrum}
\tilde{G}_{we}(s) := \frac{e(s)}{w(s)} = \left[\begin{array}{cc}
-L & C_K
\end{array}\right]\left(sI - \left[\begin{array}{cc}
A+\Delta A & 0\\
B_KS(C+\Delta C) & A_K
\end{array}\right]\right)^{-1}\left[\begin{array}{c}
B+\Delta B\\
B_KSD
\end{array}\right].
\end{equation}

We are interested in the disturbance $\mathcal{A}$ to error $e$ transfer function, which is simply the first component of the matrix transfer function $\tilde{G}_{we}(s)$. The other component is the disturbance $\mathcal{A}^{\#}$ to error $e$ transfer function, which we shall ignore.

\begin{remark}
Note that the fact that the plant is a quantum system that will be physically realizable \emph{restricts} the class of plants under consideration, when compared to the case when the plant is a classical system as in Ref. \cite{FDX}, owing to the conditions in (\ref{eq:phys_rlz}) (and also (\ref{eq:phys_rlz_unc})) required to be satisfied by the system matrices of the quantum plant (uncertain plant).
\end{remark}

\section{Numerical Example 1}\label{sec:num_ex1}
An example of a linear quantum system from quantum optics is a linearized dynamic squeezer. This corresponds to an optical cavity with a non-linear active medium inside. Let us consider the quantum plant to be a linearized dynamic squeezer, described by the QSDEs \cite{RPH}:

\begin{equation}\label{eq:sqz_plant}
\begin{split}
\left[\begin{array}{c}
da(t)\\
da(t)^{*}
\end{array}\right] &= \left[\begin{array}{cc}
-\frac{\beta}{2} & -\chi\\
-\chi^{*} & -\frac{\beta}{2}
\end{array}\right] \left[\begin{array}{c}
a(t)\\
a(t)^{*}
\end{array}\right] dt - \sqrt{\kappa} \left[\begin{array}{c}
d\mathcal{A}(t)\\
d\mathcal{A}(t)^{*}
\end{array}\right],\\
\left[\begin{array}{c}
d\mathcal{Y}(t)\\
d\mathcal{Y}(t)^{*}
\end{array}\right] &= \sqrt{\kappa} \left[\begin{array}{c}
a(t)\\
a(t)^{*}
\end{array}\right] dt + \left[\begin{array}{c}
d\mathcal{A}(t)\\
d\mathcal{A}(t)^{*}
\end{array}\right],\\
z(t) &= \left[\begin{array}{cc}
0.1 & -0.1
\end{array}\right] \left[\begin{array}{c}
a(t)\\
a(t)^{*}
\end{array}\right],
\end{split}
\end{equation}
where $\beta > 0$ is the overall cavity loss, $\kappa > 0$ determines the loss arising from the cavity mirrors, $\chi \in \mathbb{C}$ quantifies the size of the non-linearity of the active medium, and $a$ is a single annihilation operator of the cavity mode.

Here, we choose $\beta = 4$, $\kappa = 4$, and $\chi = -0.5$. Then, the above quantum system is physically realizable, since we have $\beta = \kappa$. Moreover, we fix the homodyne detection angle at $90^{\circ}$. We, thus, have the following:
\begin{equation}\label{eq:sqz_plant_matrices}
\begin{split}
A &= \left[\begin{array}{cc}
-2 & 0.5\\
0.5 & -2
\end{array}\right], \,
B = \left[\begin{array}{cc}
-2 & 0\\
0 & -2
\end{array}\right], \,
C = \left[\begin{array}{cc}
2 & 0\\
0 & 2
\end{array}\right],\\
D &= \left[\begin{array}{cc}
1 & 0\\
0 & 1
\end{array}\right] = I, \,
L = \left[\begin{array}{cc}
0.1 & -0.1
\end{array}\right], \,
S = \left[\begin{array}{cc}
\frac{e^{-\iota 90^{\circ}}}{\sqrt{2}} & \frac{e^{\iota 90^{\circ}}}{\sqrt{2}}
\end{array}\right].
\end{split}
\end{equation}

We introduce uncertainty in the parameter $\alpha := \sqrt{\kappa}$ as follows: $\alpha \rightarrow \alpha+\mu\delta\alpha$, where $|\delta| \leq 1$ is an uncertain parameter and $\mu \in [0,1)$ determines the level of uncertainty. Then, we will have the following:
\begin{equation}\label{eq:sqz_plant_unc1}
\begin{split}
\Delta A &= \left[\begin{array}{cc}
-\mu\delta\alpha^2-\frac{\mu^2\delta^2\alpha^2}{2} & 0\\
0 & -\mu\delta\alpha^2-\frac{\mu^2\delta^2\alpha^2}{2}
\end{array}\right],\\
\Delta B &= \left[\begin{array}{cc}
-\mu\delta\alpha & 0\\
0 & -\mu\delta\alpha
\end{array}\right],\\
\Delta C &= \left[\begin{array}{cc}
\mu\delta\alpha & 0\\
0 & \mu\delta\alpha
\end{array}\right].
\end{split}
\end{equation}

Then, we define the relevant matrices in (\ref{eq:unc_pars}) as follows:
\begin{equation}\label{eq:sqz_plant_unc2}
\begin{split}
F_1(t) &= \left[\begin{array}{cccc}
\delta & 0 & 0 & 0\\
0 & \delta & 0 & 0\\
0 & 0 & \delta^2 & 0\\
0 & 0 & 0 & \delta^2
\end{array}\right],\\
F_2(t) &= \left[\begin{array}{cc}
\delta & 0\\
0 & \delta
\end{array}\right],\\
E &= \left[\begin{array}{cc}
-\frac{1}{2} & 0\\
0 & -\frac{1}{2}\\
-\frac{1}{2} & 0\\
0 & -\frac{1}{2}
\end{array}\right],\\
G &= \left[\begin{array}{cc}
1 & 0\\
0 & 1
\end{array}\right],\\
H_1 &= \left[\begin{array}{cccc}
2\mu\alpha^2 & 0 & \mu^2\alpha^2 & 0\\
0 & 2\mu\alpha^2 & 0 & \mu^2\alpha^2
\end{array}\right],\\
H_2 &= \left[\begin{array}{cc}
-\mu\alpha & 0\\
0 & -\mu\alpha
\end{array}\right],\\
H_3 &= \left[\begin{array}{cccc}
-2\mu\alpha & 0 & 0 & 0\\
0 & -2\mu\alpha & 0 & 0
\end{array}\right].
\end{split}
\end{equation}

One can then verify that we have $\Delta A = H_1F_1(t)E$, $\Delta B = H_2F_2(t)G$ and $\Delta C = H_3F_1(t)E$, as required in (\ref{eq:unc_pars}). We set the uncertainty level to $\mu = 0.1$. Moreover, in our simulations, we choose a fixed value of $\delta = 1$.

We now solve the associated $H_\infty$ estimation problem using the Riccati equation approach described in the previous section. We choose the desired disturbance attenuation level to be $\gamma = 0.65$. Then, the scaling parameters are suitably chosen to be $\epsilon_1 = 0.2$ and $\epsilon_2 = 0.6$. A robust $H_\infty$ estimator is obtained as in (\ref{eq:robust_estimator}) with the following parameters:
\begin{equation}
\begin{split}
A_K &= \left[\begin{array}{cc}
0.1905 & -1.4676\\
-1.4676 & 0.1905
\end{array}\right],\\
B_K &= \left[\begin{array}{c}
-1.4717\iota\\
1.4717\iota
\end{array}\right],\\
C_K &= \left[\begin{array}{cc}
0.1 & -0.1
\end{array}\right].
\end{split}
\end{equation}

The transfer function (\ref{eq:rob_filter_tf}) of the estimator is obtained to be:
\begin{equation}
G_K(s) = \frac{-0.2943\iota s-0.3759\iota}{s^2-0.381s-2.118}.
\end{equation}

Fig. \ref{fig:errors_spectra1} shows the disturbance to error transfer function (\ref{eq:error_spectrum}) of the above robust filter. It also shows those of the optimal $H_\infty$ filter and the Kalman filter for the uncertain system (\ref{eq:uncertain1}) with $\delta = 1$ in our example for comparison, which corresponds to the uncertain parameter taking on its maximum value. The optimal Kalman and $H_\infty$ filters are built for the nominal system of (\ref{eq:uncertain1}).

\begin{figure}[!b]
\centering
\includegraphics[width=\textwidth]{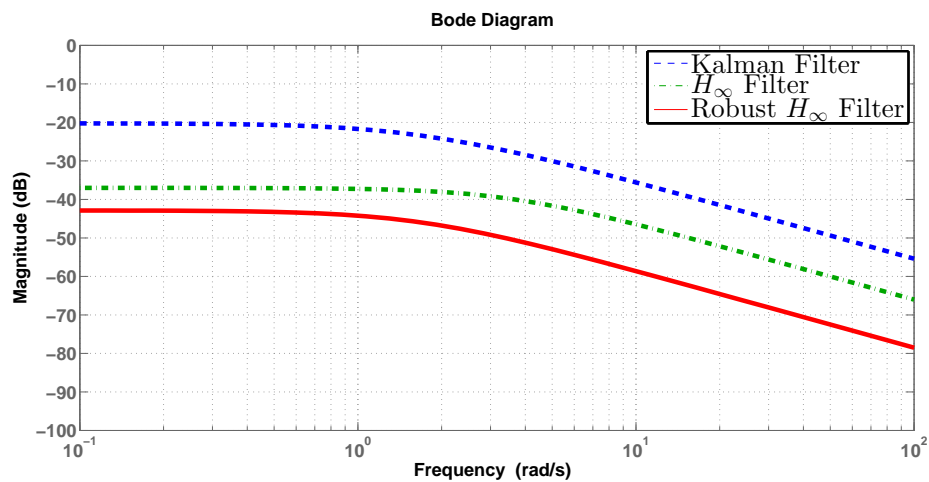}
\caption{Example 1: Disturbance to estimation error transfer functions for various estimators.}
\label{fig:errors_spectra1}
\end{figure}

Clearly, the robust $H_\infty$ filter provides for preferable disturbance attenuation and performance than the standard optimal $H_\infty$ filter or the Kalman filter for this value of the uncertain parameter $\delta$.

\section{Numerical Example 2}\label{sec:num_ex2}
In this section, we consider another numerical example involving the dynamic squeezer plant (\ref{eq:sqz_plant}), (\ref{eq:sqz_plant_matrices}). Here, we do not consider uncertainty in $\beta$ or $\kappa$. Instead, we introduce uncertainty in the squeezing parameter $\chi$ as follows: $\chi \rightarrow \chi + \mu\delta\chi$, where again $|\delta| \leq 1$ is an uncertain parameter and $0 \leq \mu < 1$ is the extent of uncertainty. Then, we will have the following:
\begin{equation}\label{eq:sqz_plant_unc3}
\begin{split}
\Delta A &= \left[\begin{array}{cc}
0 & -\mu\delta\chi\\
-\mu\delta\chi & 0
\end{array}\right],\\
\Delta B &= \left[\begin{array}{cc}
0 & 0\\
0 & 0
\end{array}\right],\\
\Delta C &= \left[\begin{array}{cc}
0 & 0\\
0 & 0
\end{array}\right].
\end{split}
\end{equation}

Note that the results from Ref. \cite{FDX} are enough to construct the robust estimator for this scenario. When $\Delta B = 0$ as in this case, $\epsilon_2$ has no impact on the estimator.

Here, the relevant matrices may be defined as follows:
\begin{equation}\label{eq:sqz_plant_unc4}
\begin{split}
F_1(t) &= \left[\begin{array}{cc}
\delta & 0\\
0 & \delta\\
\end{array}\right],\\
F_2(t) &= 0,\\
E &= \left[\begin{array}{cc}
\chi & 0\\
0 & \chi\\
\end{array}\right],\\
G &= 0,\\
H_1 &= \left[\begin{array}{cc}
0 & -\mu\\
-\mu & 0
\end{array}\right],\\
H_2 &= 0,\\
H_3 &= \left[\begin{array}{cc}
0 & 0\\
0 & 0
\end{array}\right].
\end{split}
\end{equation}

We fix the uncertainty level at $\mu = 0.1$. We then solve the associated $H_\infty$ estimation problem using our Riccati equation approach. We again choose the desired disturbance attenuation level to be $\gamma = 0.65$. The scaling parameters are chosen to be $\epsilon_1 = 0.7$ and $\epsilon_2 = 1$. A robust $H_\infty$ estimator is obtained as in (\ref{eq:robust_estimator}) with the following parameters:
\begin{equation}
\begin{split}
A_K &= \left[\begin{array}{cc}
0.3231 & -1.3660\\
-1.3660 & 0.3231
\end{array}\right],\\
B_K &= \left[\begin{array}{c}
-1.4852\iota\\
1.4852\iota
\end{array}\right],\\
C_K &= \left[\begin{array}{cc}
0.1 & -0.1
\end{array}\right].
\end{split}
\end{equation}

The transfer function (\ref{eq:rob_filter_tf}) of the estimator is obtained to be:
\begin{equation}
G_K(s) = \frac{-0.297\iota s-0.3098\iota}{s^2-0.6461s-1.762}.
\end{equation}

The disturbance to error transfer function (\ref{eq:error_spectrum}) of the robust filter in this example is shown in Fig. \ref{fig:errors_spectra_chi}. We also plot those of the optimal $H_\infty$ filter and the Kalman filter for the uncertain system (\ref{eq:uncertain1}) with $\delta = 1$ here for comparison. The optimal $H_\infty$ and Kalman filters are constructed for the nominal system of (\ref{eq:uncertain1}).

\begin{figure}
\centering
\includegraphics[width=\textwidth]{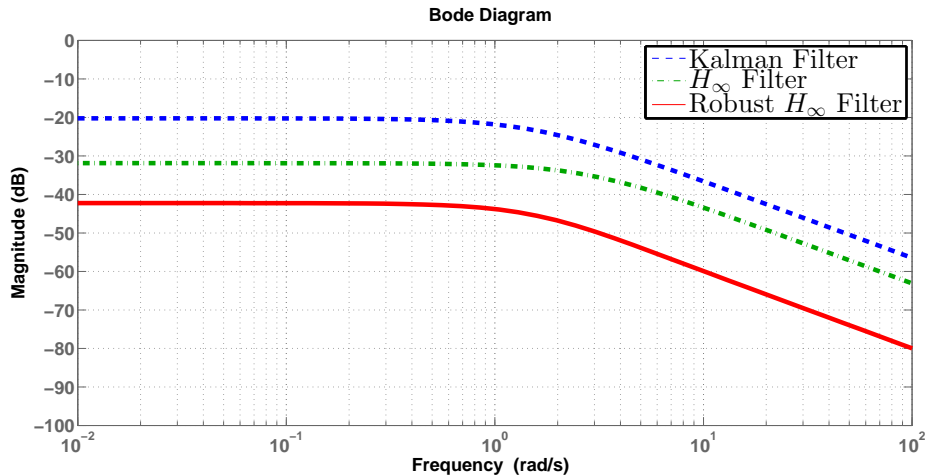}
\caption{Example 2: Disturbance to estimation error transfer functions of various estimators.}
\label{fig:errors_spectra_chi}
\end{figure}

Clearly, the robust $H_\infty$ filter again provides for preferable disturbance attenuation and performance compared with the standard optimal $H_\infty$ filter or the Kalman filter for this value of the uncertain parameter $\delta$.

\section{Conclusion}\label{sec:conc}
This paper considered the problem of robust $H_\infty$ estimation for uncertain linear quantum systems. The estimator is a classical filter, that produces a classical estimate of a variable of the quantum plant. The $H_\infty$ estimation problem is solved by converting it to a scaled $H_\infty$ control problem. The solution is obtained in the form of two algebraic complex Riccati equations. We have illustrated the results obtained by means of some numerical examples involving dynamic optical squeezers. As part of future work, the robust $H_\infty$ estimator constructed here could be applied in studying robust coherent-classical estimation. For example, it might be interesting to explore if and when a robust $H_\infty$ coherent-classical estimator (with and/or without coherent feedback) can provide better estimation precision than the robust $H_\infty$ purely-classical estimator considered in this paper. Such a comparison of optimal estimators was presented in Ref. \cite{RPH1}.

\ack The first author would like to thank Dr. Obaid Ur Rehman for useful discussion related to this work.

\bibliography{rhebib.bib}

\end{document}